\definecolor{darkred}{rgb}{0.5,0,0}
\definecolor{darkgreen}{rgb}{0,0.5,0}
\definecolor{darkblue}{rgb}{0,0,0.5}
\newcommand{\id}{\ensuremath{\mathds{1}}}
\renewcommand{\vec}[1]{\boldsymbol{#1}}
\renewcommand{\leq}{\leqslant}
\renewcommand{\geq}{\geqslant}
\renewcommand{\ge}{\geqslant}
\newcommand\hb[2]{\genfrac{}{}{0pt}{}{#1}{#2}}
\newcommand{\mi}{\mathrm{i}}
\newcommand{\me}{\mathrm{e}}
\newcommand{\md}{\mathrm{d}}
\newtheoremstyle{mystyle}{6pt}{6pt}{\normalfont}{0pt}{\bf}{.}{ }{}
\theoremstyle{mystyle}
\newtheorem{observation}{Observation}
\newtheorem{example}{Example}
\begin{document}

\nonfrenchspacing

\title{Quantum measurement incompatibility in subspaces}

\author{Roope Uola}
\affiliation{D\'{e}partement de Physique Appliqu\'{e}e, Universit\'{e}  de Gen\`{e}ve, CH-1211 Gen\`{e}ve, Switzerland}

\author{Tristan Kraft}
\affiliation{Naturwissenschaftlich-Technische Fakult\"at, Universit\"at Siegen, Walter-Flex-Str.~3, D-57068 Siegen, Germany}

\author{S\'ebastien Designolle}
\affiliation{D\'{e}partement de Physique Appliqu\'{e}e, Universit\'{e}  de Gen\`{e}ve, CH-1211 Gen\`{e}ve, Switzerland}

\author{Nikolai Miklin}
\affiliation{International Centre for Theory of Quantum Technologies (ICTQT), University of Gdansk, 80-308 Gda\'nsk, Poland}

\author{Armin~Tavakoli}
\affiliation{D\'{e}partement de Physique Appliqu\'{e}e, Universit\'{e}  de Gen\`{e}ve, CH-1211 Gen\`{e}ve, Switzerland}

\author{Juha-Pekka Pellonp\"a\"a}
\affiliation{QTF Centre of Excellence, Turku Centre for Quantum Physics, Department of Physics and Astronomy,
  University of Turku, FI-20014 Turun yliopisto, Finland
}

\author{Otfried Gühne}
\affiliation{Naturwissenschaftlich-Technische Fakult\"at, Universit\"at Siegen, Walter-Flex-Str.~3, D-57068 Siegen, Germany}

\author{Nicolas Brunner}
\affiliation{D\'{e}partement de Physique Appliqu\'{e}e, Universit\'{e}  de Gen\`{e}ve, CH-1211 Gen\`{e}ve, Switzerland}

\date{\today}

\begin{abstract}
We consider the question of characterising the incompatibility of sets of high-dimensional quantum measurements.
We introduce the concept of measurement incompatibility in subspaces.
That is, starting from a set of measurements that is incompatible, one considers the set of measurements obtained by projection onto any strict subspace of fixed dimension.
We identify three possible forms of incompatibility in subspaces: (i) \textit{incompressible incompatibility}: measurements that become compatible in every subspace, (ii) \textit{fully compressible incompatibility}: measurements that remain incompatible in every subspace, and (iii) \textit{partly compressible incompatibility}: measurements that are compatible in some subspace and incompatible in another.
For each class we discuss explicit examples.
Finally, we present some applications of these ideas.
First we show that joint measurability and coexistence are two inequivalent notions of incompatibility in the simplest case of qubit systems.
Second we highlight the implications of our results for tests of quantum steering.
\end{abstract}

\maketitle

\section{Introduction}

Quantum theory is built on Hilbert spaces, in which observables are presented as Hermitian operators and states as positive unit-trace matrices.
This gives the theory a noncommuting structure, resulting in, for example, various uncertainty relations and different notions of measurement incompatibility.
In the simplest case of Hermitian operators, all incompatibility is captured by the concept of noncommutativity, but for more general measurements given by positive operator-valued measures (or POVMs for short), various possibilities arise.
These include measurement disturbance~\cite{HW10}, joint measurability~\cite{Bus86,Lah03,HMZ16} and coexistence~\cite{Lud54,LP97}.
Of these, joint measurability is probably the most well known.
Loosely speaking, a set of POVMs is called jointly measurable when there exists a single parent POVM, from which one can recover the statistics of all POVMs in the set.
This concept has found many applications in quantum information theory, notably through connections to quantum nonlocality~\cite{WPF09,BV18,HQB18}, quantum steering~\cite{UMG14,QVB14,UBGP15}, macrorealism~\cite{UVB19}, and temporal and channel steering~\cite{KPUR15,BDM15,ULGP18}, as well as in prepare-and-measure scenarios~\cite{CHT18,CHT19,SSC19,GQA19,UKS+19,CHT20,TU20}.

Recently, the notion of joint measurability has been investigated for measurements on high-dimensional systems~\cite{CHT12,Haa15,BQG+17,BN18,DSFB19,CHT19,DFK19,NDBG20,Kiu20,DSU+20}, which allow in principle for stronger incompatibility compared to the case of qubits.
This raises the question of whether one could define a notion of ``dimensionality'' for measurement incompatibility.
In particular, given a set of incompatible POVMs, can the incompatibility be localised in specific subspaces of lower-dimensional POVMs or is it, on the contrary, an intrinsic property of the high-dimensional space?
To formalise this problem we introduce the idea of measurement incompatibility in subspaces.
That is, given a set of non jointly measurable POVMs, we project (i.e., truncate) each POVM onto a lower dimensional subspace and investigate the compatibility properties of the resulting set of projected POVMs.
We identify all possible forms of measurement incompatibility under this scenario, which can be of three types: (i) \textit{incompressible incompatibility}, i.e., measurements that become compatible in every strict subspace, (ii) \textit{fully compressible incompatibility}, i.e., measurements that remain incompatible in every nontrivial subspace, and (iii) \textit{partly compressible incompatibility}, i.e., measurements that are compatible in some subspace and incompatible in another.
We present explicit examples of all three categories of incompatibility in subspaces.

Beyond the fundamental interest, we show that these ideas have applications.
First, taking advantage of an example of partly compressible incompatibility, we show that the notions of joint measurability and coexistence (i.e., joint measurability of \textit{all} binarisations of the involved measurements) are inequivalent in the simplest case of qubit POVMs.
This answers a long-standing open question on the relation between these notions~\cite{Lud54,LP97}.
For binary or extremal measurements the concepts are known to coincide, even when using one extremal continuous variable measurement in the latter case~\cite{CHT05}.
On the contrary, for general measurements in qutrit systems and beyond, the concepts are known to be inequivalent~\cite{RRW13}.
We solve the missing qubit scenario.
It is worth noting that there was no reason to expect this result since other incompatibility notions, such as noncommutativity and unavoidable measurement disturbance, are known to be inequivalent only from dimension three on~\cite{HW10}.

Second, these ideas have an impact on quantum correlations, in particular the notion of quantum steering~\cite{WJD07,CS16b,UCNG20}, which is directly connected to measurement incompatibility~\cite{QVB14,UMG14,UBGP15}.
We discuss the role of dimension in the context of this connection.
The latter states that a party performing an incompatible set of measurements can always steer another party via a well-chosen bipartite quantum state.
We point out that the connection cannot be directly applied to scenarios where the steered party has a system of lower dimension compared to that of the steering party.

\section{Preliminaries}

To introduce measurement incompatibility, we first fix the notation.
A measurement assemblage $\mathcal M=\{M_{a_x|x}\}_{a_x,x}$ consists of POVMs, i.e., Hermitian positive semidefinite matrices, such that for every $x$ one has $\sum_{a_x} M_{a_x|x}=\id$, acting on a finite-dimensional Hilbert space.
Here $\id$ is the identity operator, $x$ labels the choice of measurement, and $a_x$ is the corresponding outcome.
POVMs give rise to measurement statistics in a given quantum state $\varrho$ through the formula $p(a_x|x,\varrho)=\tr(M_{a_x|x}\varrho)$.
When there is no risk of confusion, we substitute $a_x$ with $a$.

This formalism motivates the definition of joint measurability of a measurement assemblage $\mathcal M$ as the possibility of obtaining the statistics of any measurement in $\mathcal M$ from a common parent measurement~\cite{HMZ16}.
Any outcome of the parent measurement is a list $\vec a$ of outcomes of single measurements, and the statistics of a single measurement is obtained by summing over certain parts of the list.
Formally, joint measurability of $\mathcal M$ is defined as the existence of a parent POVM $\{G_{\vec a}\}_{\vec a}$ such that
\begin{align}
  M_{a_x|x}=\sum_{\hb{a_i}{i\neq x}} G_{\vec a}.
\end{align}
Measurements that do not allow a parent POVM of this form are called not jointly measurable or incompatible.

The concept of joint measurability is best illustrated with an example.
In a qubit system, let us take a measurement assemblage corresponding to the noisy versions of the binary spin measurements $\sigma_x$ and $\sigma_z$, i.e., $M^\mu_{\pm|1}=\frac{1}{2}(\id\pm\mu\sigma_x)$ and $M^\mu_{\pm|2}=\frac{1}{2}(\id\pm\mu\sigma_z)$.
Here the parameter $\mu\in[0,1]$ quantifies the amount of noise.
For these measurements, a natural candidate of a parent POVM is given by a procedure, where a 50-50 beam splitter (or a coin) decides between the measurement directions $x+z$ and $x-z$~\cite{ULMH16}.
The resulting statistics are described by the POVM
\begin{align}
  G_{i,j}=\frac{1}{4}\Big[\id+\frac{1}{\sqrt{2}}(i\sigma_x+j\sigma_z)\Big],
\end{align}
where $i,j\in\{-1,1\}$.
It is straightforward to verify that ignoring the outcome $j=\pm1$ results in $M^{1/\sqrt 2}_{\pm|1}$ and similarly ignoring the outcome $i=\pm1$ results in $M^{1/\sqrt 2}_{\pm|2}$.
Hence, one has a joint measurement for the noisy spin measurements with $\mu=1/\sqrt2$.
It can be shown that this threshold is indeed optimal in the sense that there is no parent POVM when $\mu>1/\sqrt{2}$~\cite{Bus86}.

\section{Measurement incompatibility in subspaces}

We are interested in the following problem: given an incompatible measurement assemblage $\mathcal M=\{M_{a_x|x}\}_{a_x|x}$ acting in a $d$-dimensional Hilbert space with $2<d<\infty$, what happens to the incompatibility when the assemblage is truncated to an $n$-dimensional subspace with $2\leq n<d$?
The truncation is modelled by a projection $P_n$ onto an $n$-dimensional subspace, i.e., we are interested in the compatibility properties of the measurement assemblage
\begin{equation}
    \mathcal M_n^U=\{P_nU^\dagger M_{a_x|x}UP_n\}_{a_x|x},
\end{equation}
where $U$ is some unitary operator acting on the initial $d$-dimensional Hilbert space and $P_n=\sum_{i=1}^n|i\rangle\langle i|$.
Note that in contrast to truncating quantum states for which the normalisation, i.e., unit-trace property, can be altered, for POVMs the normalisation is unaltered when seen as measurements in the subspace, i.e., the normalisation is the identity operator in the subspace.

We find that there exist three different forms of incompatibility in subspaces.
First, compatibility can be present in \textit{all} strict subspaces of dimension $n$, i.e., $\mathcal M_n^U$ being compatible for every unitary $U$ in the initial Hilbert space.
Second, incompatibility can be present in \textit{all} strict subspaces of dimension $n$, i.e., $\mathcal M_n^U$ being incompatible for every unitary $U$.
Finally, there is the possibility of having compatibility for some unitary $U$ and incompatibility for some other unitary $V$.
Note that compatible measurement assemblages fulfil the first notion trivially, i.e., a parent measurement $G_{\vec a}$ of $\mathcal M$ becomes a parent measurement $P_n U^\dagger G_{\vec a}UP_n$ for the truncated assemblage $\mathcal M_n^U$, see also \cite{Kiu20}.
To clarify the different types of incompatibility in subspaces, we discuss each category in detail below.

\subsection{Incompressible incompatibility}

We first show the existence of sets of measurements that become compatible in any strict subspace.
Hence incompatibility is incompressible here, as it vanishes in every possible lower-dimensional subspace.
Intuitively, this represents the most fragile form of incompatibility in subspaces.

Formally, we are searching for an incompatible measurement assemblage $\mathcal M$, with the property that the truncation $\mathcal M_n^U$ is compatible for every $U$ and $n<d$.

Here we provide a method for constructing such assemblages for the case $d=3$ (and, hence, $n=2$).
To this end, we use the connection between measurement incompatibility and quantum steering.
More specifically, we start from the steering scenario and consider the so-called stronger Peres conjecture~\cite{Pus13}.
The latter was recently disproven~\cite{MGHG14} (see also~\cite{VB14,YO17}), and we make use of these results to construct a measurement assemblage that is incompressible.

The stronger Peres conjecture states that every bound entangled state admits a local hidden state model~\cite{WJD07}, i.e., cannot lead to quantum steering.
In other words, given a bound entangled state $\varrho_{AB}$, i.e., an entangled state that cannot be distilled into a pure entangled state, together with any measurement assemblage $\{A_{a_x|x}\}_{a_x,x}$, one is conjectured to have
\begin{equation}\label{eqn:LHS}
  \sigma_{a_x|x}:=\tr_A[(A_{a_x|x}\otimes\id)\varrho_{AB}]=\sum_{\hb{a_i}{i\neq x}}\sigma_{\vec{a}},
\end{equation}
where $\sigma_{\vec{a}}$ are positive operators with the property $\sum_{\vec{a}}\sigma_{\vec{a}}=\tr_A(\varrho_{AB})=:\varrho_B$.
The operators $\sigma_{\vec{a}}$ are referred to as the local hidden states, and with the marginalisation in Eq.~\eqref{eqn:LHS}, they form a local hidden state model for the state assemblage $\sigma_{a_x|x}$.
The latter is then called unsteerable.
If no such model can be constructed, the assemblage is steerable.
It has turned out that the existence of a local hidden state model is equivalent to the joint measurability of the corresponding ``pretty good measurements'' $M_{a_x|x}:=\varrho_B^{-1/2}\sigma_{a_x|x}\varrho_B^{-1/2}$~\cite{UBGP15}.
On the contrary, if $\sigma_{a_x|x}$ is steerable, then the pretty good measurement is incompatible.

With these tools we are ready to explain our construction.
We start from the counterexample to the stronger Peres conjecture in the two-qutrit case presented in Ref.~\cite{UBGP15}.
This features a specific bound entangled state $\varrho_{AB}$ which, combined with well-chosen measurements, leads to a steerable assemblage.
The corresponding pretty good measurements are therefore incompatible.
This set of POVMs is in fact incompressible.
That is, a projection onto any possible qubit subspace will necessarily give a jointly measurable set of POVMs.
To see this, we note that the state assemblage corresponding to the truncated measurements can be obtained from the original steering setup by projecting the steered side of the bound entangled state to a qubit subspace.
As $\varrho_{AB}$ is positive under partial transposition~\cite{Per96}, the state resulting from the local projection is necessarily a separable state~\cite{HHH96}.
As separable states can only lead to unsteerable assemblages, it follows that the corresponding projected pretty good measurements are jointly measurable, which concludes the proof.
Note that we provide a detailed proof in Appendix~\hyperref[app:peres]{A}.

It is worth mentioning that one can modify the concept of incompressible incompatibility by demanding that a measurement assemblage becomes compatible under every Heisenberg channel (a channel preserving identity, but not necessarily trace preserving) to a smaller dimensional system.
Although we will leave open the question of whether this provides a strict subset of measurement assemblages that are compatible in every subspace, we note that the Peres conjecture technique also works in this scenario, see Appendix~\hyperref[app:peres]{A}.
The channel formulation of incompatibility in subspaces turns out to be relevant when applying the concept to quantum steering.

Intuitively, incompressible incompatibility can be viewed as a weak form of incompatibility.
This can be formalised more quantitatively by considering a measure of incompatibility, the so-called depolarising incompatibility robustness~\cite{DFK19}.
This measure corresponds to the critical amount of depolarising noise one needs to add to incompatible measurements to make them compatible, namely,
\begin{align}\label{eqn:primal}
 \eta^{\mathrm{d}}_{\{A_{a|x}\}}=\max_{\eta,\{G_{\vec{j}}\}_{\vec{j}}} &\quad \eta \nonumber \\
  \text{s.t.~} &\quad \sum_{\vec{j}} \delta_{j_x,a} G_{\vec{j}} = A_{a|x}^\eta \quad \forall a,x, \\
  &\quad G_{\vec{j}} \geq 0 \quad \forall {\vec{j}}, \quad \eta \leq 1, \nonumber
\end{align}
where $A_{a|x}^\eta=\eta A_{a|x}+(1-\eta)\tr(A_{a|x})\id/d$.
The measure is clearly nonnegative and it equals one for compatible measurements; the lower it is, the more incompatible the measurements are.
The intuition of the approach below is to average over all lower-dimensional parent POVMs in order to get one for the initial measurements; naturally in the process some noise appears so that the resulting parent measurement actually gives a lower bound on the depolarising incompatibility robustness.

We consider a measurement assemblage $\{A_{a|x}\}_{a,x}$ such that for all projections $P_n$ onto an $n$-dimensional subspace ($n>1$) of the $d$-dimensional space in which the measurements live, there exists a parent POVM $G_{\vec{j}}^{(P_n)}$ for the measurement assemblage $\{P_nA_{a|x}P_n\}_{a,x}$.
Then we have that
\begin{equation}\label{eqn:parent}
  G_{\vec{j}}:=\frac{d}{n}\int G_{\vec{j}}^{(P_n)}\md P_n
\end{equation}
is a parent POVM for the measurements with elements
\begin{equation}
  \eta_nA_{a|x}+(1-\eta_n)\frac{\tr(A_{a|x})}{d}\id,\quad\text{with}\quad\eta_n=\frac{nd-1}{d^2-1}
\end{equation}
so that the depolarising incompatibility robustness admits a lower bound
\begin{equation}
    \eta_{\{A_{a|x}\}}^\mathrm{d}\geq\frac{nd-1}{d^2-1}.
\end{equation}
Note that for $n=d$, we indeed get the expected trivial bound of one.
We give the proof of the above bound in the case $n=2$, as the general case of $n>1$ can be obtained through an iterative procedure.
We decompose any projection $P_2$ into $\ketbra{\varphi}+\ketbra{\psi}$ where $\ket{\varphi}$ and $\ket{\psi}$ are orthogonal.
Note that the integration $\int\md P_2$ used above should always be thought of as $\iint\md\psi\md\varphi$, where $\ket{\psi}$ lives in the $(d-1)$-dimensional subspace orthogonal to $\ket{\varphi}$.
Note also that the integral notation for operators is a convenient formal tool that nonetheless needs some caution: it always underpins the complex integrals obtained by sandwiching it with two vectors.
Then the marginals of the proposed parent POVM~\eqref{eqn:parent} are
\begin{align}
  &\sum_{\vec{j}}\delta_{j_x,a}G_{\vec{j}}\nonumber\\
  =\,&\frac{d}{2}\int\sum_{\vec{j}}\delta_{j_x,a}G_{\vec{j}}^{(P_2)}\md P_2\text{~~~~by linearity}\nonumber\\
  =\,&\frac{d}{2}\int P_2A_{a|x}P_2\md P_2\text{~~~~by assumption}\nonumber\\
  =\,&\frac{d}{2}\iint \big(\ketbra{\varphi}+\ketbra{\psi}\big)A_{a|x}\big(\ketbra{\varphi}+\ketbra{\psi}\big)\md\psi\md\varphi.\nonumber\\
\end{align}
Since $\ket{\psi}$ lives in the subspace orthogonal to $\ket{\varphi}$ we have that $\int\ketbra{\psi}\md\psi=(\id-\ketbra{\varphi})/(d-1)$.
Therefore we get
\begin{equation}\label{eqn:margint}
  \sum_{\vec{j}}\delta_{j_x,a}G_{\vec{j}}=\frac{A_{a|x}}{d-1}+\frac{d(d-2)}{d-1}I(A_{a|x}),
\end{equation}
where
\begin{equation}\label{eqn:I}
    I(M):=\int\ketbra{\varphi}M\ketbra{\varphi}\md\varphi.
\end{equation}

Computing the integral~\eqref{eqn:I} requires some care.
Consider a Hermitian operator written in its diagonal basis
\begin{equation}
    M=\sum_{i=1}^d\lambda_i\ketbra{i}.
\end{equation}
In the following, we will make use of the book~\cite{Rud80} by Rudin on function theory on the complex unit ball; since his vocabulary is quite different from ours, we establish the connection in detail.
We start by explaining how $\bra{i}[I(\ketbra{i})]\ket{i}$ can be computed, with the rest being similar.
So we aim at evaluating
\begin{equation}
    \bra{i}\big[I(\ketbra{i})\big]\ket{i}=\int|\braket{\varphi}{i}|^4\md\varphi,
\end{equation}
which writes, in the language of Ref.~\cite{Rud80},
\begin{equation}
    \int|\zeta^\alpha|^2\md\sigma(\zeta),
\end{equation}
where the variable $\zeta=(\braket{\varphi}{i})$ and the multi-index $\alpha=(2)$ contain only one element in this case.
Then, Proposition~1.4.9(1) from~\cite{Rud80} guarantees that, as $|\alpha|=2$,
\begin{equation}
    \bra{i}\big[I(\ketbra{i})\big]\ket{i}=\frac{(d-1)!\alpha!}{(d-1+|\alpha|)!}=\frac{2}{d(d+1)}.
\end{equation}
For $j\neq i$, the same argument applies with $z=(\braket{\varphi}{i},\braket{\varphi}{j})$ and $\alpha=(1,1)$ now containing two elements so that
\begin{equation}
    \bra{j}\big[I(\ketbra{i})\big]\ket{j}=\frac{1}{d(d+1)}.
\end{equation}
For the off-diagonal elements, Proposition~1.4.8 from Ref.~\cite{Rud80} indicates that they are zero.
Combining things together we get
\begin{align}
    I(M)&=\sum_{i=1}^d\lambda_iI(\ketbra{i})\\
    &=\sum_{i=1}^d\lambda_i\frac{\ketbra{i}+\sum_j\ketbra{j}}{d(d+1)}\\
    &=\frac{M+\tr(M)\id}{d(d+1)},
\end{align}
so that, by plugging this expression in Eq.~\eqref{eqn:margint} we eventually get
\begin{equation}
  \sum_{\vec{j}}\delta_{j_x,a}G_{\vec{j}}=\frac{2d-1}{d^2-1}A_{a|x}+\frac{d(d-2)}{d^2-1}\tr(A_{a|x})\frac{\id}{d},
\end{equation}
which concludes the proof.

In the generalisation to projections with a higher rank, the following integrals are obtained:
\begin{equation}
    \frac{d}{n}\int P_n\md P_n=\id,
\end{equation}
\begin{equation}
    \frac{d}{n}\int P_n M P_n\md P_n=\frac{(nd-1)M+(d-n)\tr(M)\id}{d^2-1},
\end{equation}
\begin{equation}
    \frac{d}{n}\int\tr(P_nMP_n)P_n\md P_n=\frac{(d-n)M+(nd-1)\tr(M)\id}{d^2-1}.
\end{equation}
Note that $\md P_n$ is an abusive notation that should be understood as $\md\varphi_n\ldots\md\varphi_1$, where each $\ket{\varphi_k}$ lives in the $(d-k+1)$-dimensional subspace orthogonal to all $\ket{\varphi_j}$ with $j<k$.

\subsection{Fully compressible incompatibility}

Let us now discuss a completely different form of incompatibility in subspaces, namely, sets of measurements that remain incompatible in every lower-dimensional subspace.
Intuitively, this represents the most robust form of incompatibility in subspaces.

Formally, we are searching for an incompatible measurement assemblage $\mathcal M$, with the property that the truncation $\mathcal M_n^U$ is incompatible for every $U$ and $n\geq2$.
We present a sufficient criterion for measurements to be of this type when truncated from dimension $d$ to dimension $d-1$.
The criterion works for measurements constructed from orthonormal bases of the Hilbert space, i.e., measurements for which every POVM element is of the form $|\varphi_a\rangle\langle\varphi_a|$ for some basis $\{|\varphi_a\rangle\}_{a=1}^d$.

To derive our criterion, let $\{|\varphi_n\rangle\}_{n=1}^d$ (with $d\geq3$) be an arbitrary orthonormal basis of $\mathcal H$ and $\{|\varphi'_k\rangle\}_{k=1}^d$ another orthonormal basis such that 
\begin{equation}\label{eqn:ehto1}
  \langle\varphi_n|\varphi'_k\rangle\ne 0 \qquad \text{for all $n$ and $k$}.
\end{equation}
Define two $d$-valued (rank-one) projection valued measures (PVMs)  $\{P_n\}_n$ and $\{P'_k\}_k$ with $P_n=|\varphi_n\rangle\langle\varphi_n|$ and $P'_k=|\varphi'_k\rangle\langle\varphi'_k|$.
They are totally noncommutative: 
\begin{equation}\label{eqn:kaava1}
  P_nP'_k=\underbrace{\langle\varphi_n|\varphi'_k\rangle}_{\ne 0}|\varphi_n\rangle\langle\varphi'_k|
  \ne \underbrace{\langle\varphi'_k|\varphi_n\rangle}_{\ne 0}|\varphi'_k\rangle\langle\varphi_n|=
  P'_kP_n
\end{equation}
for all $n$ and $k$ (since the ranges of $P_nP'_k$ and $P'_kP_n$ are disjoint: $\mathds{C}|\varphi_n\rangle\cap \mathds{C}|\varphi'_k\rangle=\{0\}$).
Hence, $P$ and $P'$ are not jointly measurable.

Let $|\psi\rangle$ be an arbitrary unit vector in the $d$-dimensional space and define the projection $R=\id-|\psi\rangle\langle\psi|$ onto the (arbitrary) $(d-1)$-dimensional closed subspace $R\mathcal H$ (i.e., $\mathcal H=R\mathcal H\oplus\mathds{C}|\psi\rangle$).
We have three cases.

First, $|\psi\rangle\langle\psi|$ commutes with all $P_n$'s, i.e., $|\psi\rangle\langle\psi|=|\varphi_m\rangle\langle\varphi_m|$ for some $m$ (since the rank-one projection $|\psi\rangle\langle\psi|$ must be diagonal in the basis $\{|\varphi_n\rangle\}_{n=1}^d$).
Now also $R$ commutes with $P$ so that
the projections $RP_n R$ constitute a $(d-1)$-valued rank-one PVM of $R\mathcal H$ (since $RP_m R=0$).
Moreover, $\{RP'_k R\}_k$ is a $d$-valued rank-one POVM of $R\mathcal H$ (note that 
  $R|\varphi'_k\rangle=\sum_{n\ne m} \langle\varphi_n|\varphi'_k\rangle
|\varphi_n\rangle\ne 0$ for all $k$).
Similarly as in Eq.~\eqref{eqn:kaava1}, one sees that $RP_nR\,RP'_kR\ne RP'_kR\,RP_nR$ for all $n\ne m$ and for all $k$ so that the projected observables are not jointly measurable  (recall that a PVM and a POVM are jointly measurable if and only if they commute).

Second, $|\psi\rangle\langle\psi|$ commutes with all $P'_k$'s, i.e., $|\psi\rangle\langle\psi|=|\varphi'_\ell\rangle\langle\varphi'_\ell|$ for some $\ell$.
Exactly as in the preceding item (just change the roles of the bases $|\varphi_n\rangle\longleftrightarrow |\varphi'_k\rangle$ ) one sees that the (projected) PVM and POVM do not commute and thus are not jointly measurable.

Third, suppose that $|\psi\rangle\langle\psi|$ is not $|\varphi_n\rangle\langle\varphi_n|$ or $|\varphi'_k\rangle\langle\varphi'_k|$ for any $n$ or $k$, that is, $R$ does not commute with $P$ or $P'$.
Now $RP_n R\ne 0$ and $RP'_k R\ne 0$ for all $n$ and $k$ (indeed, suppose that, e.g., $0=R|\varphi_n\rangle=|\varphi_n\rangle-\langle\psi|\varphi_n\rangle|\psi\rangle$, i.e., $|\psi\rangle=c|\varphi_n\rangle$, with $c\in\mathds{C}$ such that $|c|=1$, i.e., $|\psi\rangle\langle\psi|=|\varphi_n\rangle\langle\varphi_n|$, a contradiction).
Hence, both $\{RP_n R\}_n$ and $\{RP'_k R\}_k$ are $d$-valued rank-one POVMs (not PVMs) of $R\mathcal H$ with the minimal Naimark dilations $(\mathcal H,P,R)$ and $(\mathcal H,P',R)$.
Assume that they have a joint POVM $\{M_{nk}\}_{n,k}$, i.e.,
$\sum_{k=1}^dM_{nk}=RP_n R$ and $\sum_{n=1}^dM_{nk}=RP'_k R$.
From Ref.~\cite{Pel14} one sees that there are unique numbers $a_{nk}\ge 0$ and $b_{nk}\ge 0$ such that $\sum_{k=1}^d a_{nk}=1$ for all $n$ and $\sum_{n=1}^d b_{nk}=1$ for all $k$ and
\begin{equation}
  M_{nk}=a_{nk}RP_n R=b_{nk}RP'_k R.
\end{equation}
From $RP_n R\ne 0\ne RP'_k R\ne 0$ one gets $a_{nk}=0$ if and only if $b_{nk}=0$.

Since $\sum_{k=1}^d a_{nk}=1$ (for all $n$) we must have $a_{n\pi(n)}\ne 0$ for some index $k=\pi(n)$.
Hence, for each $n=1,\ldots,d$, $0\ne R|\varphi_n\rangle=c_n R|\varphi'_{\pi(n)}\rangle$, i.e., $R(|\varphi_n\rangle-c_n|\varphi'_{\pi(n)}\rangle)=0$, for some complex number $c_n\ne 0$.
Hence, $|\varphi_n\rangle-c_n|\varphi'_{\pi(n)}\rangle$ belongs to the kernel of $R$ and is of the form $c'_n|\psi\rangle$, $c'_n\ne 0$, by condition \eqref{eqn:ehto1}, so that
\begin{equation}
  |\psi\rangle=a_n|\varphi_n\rangle+b_n|\varphi'_{\pi(n)}\rangle\qquad\text{for all $n$,}
\end{equation}
where $a_n$ and $b_n$ are some \emph{nonzero} (by the assumption) complex numbers such that $\|\psi\|=1$.
The constants $a_n$ and $b_n$ are unique since $|\varphi_n\rangle$ and $|\varphi'_{\pi(n)}\rangle$ are linearly independent by condition \eqref{eqn:ehto1}.
It is easy to show that $\pi$ is a permutation (bijection)  on $\{1,2,\ldots,d\}$ (such that $a_{n\pi(n)}\ne 0$ for all $n$); indeed, if $\pi(n)=\pi(m)$ then $(b_n-b_m)|\varphi'_{\pi(n)}\rangle=a_m|\varphi_m\rangle-a_n|\varphi_n\rangle$ which forces $b_n=b_m$ and then $a_m=0=a_n$ yielding a contradiction: $|\psi\rangle=b_n|\varphi'_{\pi(n)}\rangle$.

Now $|\psi\rangle\in\bigcap_{n=1}^d(\mathds{C}|\varphi_n\rangle+\mathds{C}|\varphi'_{\pi(n)}\rangle)$ which we want to be $\{0\}$ \emph{for all permutations} $\pi$ (a contradiction since $\psi\ne 0$).
Taking $\langle\varphi_m|\psi\rangle$, we have
\begin{equation}
  \quad\quad a_n\delta_{nm}+b_n\langle\varphi_m|\varphi'_{\pi(n)}\rangle=a_j\delta_{jm}+b_j\langle\varphi_m|\varphi'_{\pi(j)}\rangle
\end{equation}
for all $n,\,m,\,j$.
In particular, if $n\ne m\ne j$, we have $b_n\langle\varphi_m|\varphi'_{\pi(n)}\rangle=b_j\langle\varphi_m|\varphi'_{\pi(j)}\rangle$ or 
\begin{equation}
  \frac{b_n}{b_j}=\frac{\langle\varphi_m|\varphi'_{\pi(j)}\rangle}{\langle\varphi_m|\varphi'_{\pi(n)}\rangle}
\end{equation}
where the left hand side does not depend on $m$.
If we choose any $n\ne j\ne k\ne n$ (which is possible since $d\ge 3$) and write $J=\pi(j)$, $K=\pi(k)$, $N=\pi(n)$ (so that $N\ne J\ne K\ne N$ since $\pi$ is bijective)
we get
\begin{equation}
  \frac{\langle\varphi_m|\varphi'_J\rangle}{\langle\varphi_m|\varphi'_N\rangle}=\frac{b_n}{b_j}=\frac{b_n}{b_k}\frac{b_k}{b_j}=\frac{\langle\varphi_o|\varphi'_K\rangle}{\langle\varphi_o|\varphi'_N\rangle}\frac{\langle\varphi_p|\varphi'_J\rangle}{\langle\varphi_p|\varphi'_K\rangle}
\end{equation}
or
\begin{equation}
  \langle\varphi_m|\varphi'_J\rangle\langle\varphi_o|\varphi'_N\rangle\langle\varphi_p|\varphi'_K\rangle=\langle\varphi_p|\varphi'_J\rangle\langle\varphi_m|\varphi'_N\rangle\langle\varphi_o|\varphi'_K\rangle
\end{equation}
for all $j\ne m\ne n\ne o\ne k\ne p\ne j$ (which is possible since $d\ge 3$).
To conclude, we have to find an orthonormal basis satisfying \eqref{eqn:ehto1} and the following (sufficient) condition: for all  $N\ne J\ne K\ne N$ and $m \ne o \ne p \ne m$
\begin{equation}\label{eqn:ehto2_app}
  \langle\varphi_m|\varphi'_J\rangle\langle\varphi_o|\varphi'_N\rangle\langle\varphi_p|\varphi'_K\rangle\ne \langle\varphi_p|\varphi'_J\rangle\langle\varphi_m|\varphi'_N\rangle\langle\varphi_o|\varphi'_K\rangle.
\end{equation}
In other words if conditions \eqref{eqn:ehto1} and \eqref{eqn:ehto2_app} are satisfied then $P$ and $P'$ are incompatible PVMs in a $d$-dimensional Hilbert space, with all projections onto $(d-1)$-dimensional subspaces also incompatible.

\begin{example}
  Let $d=3$ and $\{|\varphi_n\rangle\}_{n=1}^3$ be the computational basis of $\mathds{C}^3$.
  Now
  \begin{eqnarray*}
    |\varphi'_1\rangle &=& \frac{1}{\sqrt{14}}(1 , 2  , 3 ) \\
    |\varphi'_2\rangle &=& \frac{1}{\sqrt{27}}(-5 ,1 ,1 ) \\
    |\varphi'_3\rangle &=& \frac{1}{\sqrt{378}}(1 ,16 ,-11 )
  \end{eqnarray*}
  clearly satisfy conditions \eqref{eqn:ehto1} and \eqref{eqn:ehto2_app}.
\end{example}

Intuitively, measurements featuring fully compressible incompatibility should be very incompatible.
The question of quantifying measurement incompatibility has been recently formalised~\cite{DFK19}, and it appears that pairs of measurements based on two mutually unbiased bases (MUBs) are among the most incompatible ones.
Surprisingly, the following example shows that a pair of MUBs is not fully compressible (but only partly compressible).
This shows that incompatibility in subspaces captures a different aspect of measurement incompatibility compared to the usual quantifiers.

\begin{example}
  Continuing with the notation from the previous example, the Fourier connected vectors
  \begin{equation}
    |\varphi'_k\rangle=\frac{1}{\sqrt d}\sum_{n=1}^d \me^{\frac{2\mi\pi nk}{d}}|\varphi_n\rangle
  \end{equation}
  do not satisfy \eqref{eqn:ehto2_app}.
  Now, if $d=3$ and we choose $\ket{\psi}=\frac{1}{\sqrt{3}}(1,1,\omega)$, where $\omega=\exp(2\mi\pi/3)$, we get
  \begin{equation}
    R=\id_3-\ket{\psi}\bra{\psi}=\frac19
    \begin{pmatrix}
      2 &-1 &-\omega^2 \\
      -1& 2&-\omega^2 \\
      -\omega& -\omega& 2\\
    \end{pmatrix}
  \end{equation}
  and
  \begin{equation}
    R\ket{\varphi_1}\bra{\varphi_1} R =\frac19
    \begin{pmatrix}
      4 &-2 &-2\omega^2 \\
      -2& 1&\omega^2 \\
      -2\omega& t& 1\\
    \end{pmatrix}
    =R\ket{\varphi'_1}\bra{\varphi'_1} R
  \end{equation}
  and similarly
  \begin{equation}
    R\ket{\varphi_n}\bra{\varphi_n}R=R\ket{\varphi'_n}\bra{\varphi'_n} R\qquad \text{for all $n=1,2,3$,}
  \end{equation}
  i.e., the projected POVMs $\{R\ket{\varphi_n}\bra{\varphi_n}R\}_n$ and $\{R\ket{\varphi'_n}\bra{\varphi'_n}R\}_n$ are the same POVM, which makes them trivially jointly measurable.
  As a technical note, the truncated POVM has two minimal Naimark dilations $(\mathcal H,\ket{\varphi_n}\bra{\varphi_n},R)$ and $(\mathcal H,\ket{\varphi'_n}\bra{\varphi'_n},R)$, the only difference being the projective measurement in the dilation space, namely the fact that one can measure either $\{\ket{\varphi_n}\bra{\varphi_n}\}_n$ or $\{\ket{\varphi'_n}\bra{\varphi'_n}\}_n$ in any subsystem's state $\varrho$ to get the same statistics.
\end{example}

\subsection{Partly compressible incompatibility}

Together with the two extreme scenarios, it is possible to have incompatible measurements in dimension $d$, which become compatible or incompatible depending on the truncation.
Arguably, this represents the least surprising (and probably most common) form of incompatibility in subspaces.

Formally, we are searching for a measurement assemblage $\mathcal M$ that is incompatible, with the property that the truncation $\mathcal M_n^U$, with fixed $2\leq n<d$, is compatible for some $U$ and incompatible for some other $\tilde U$.
The most naive way of finding such examples is to add together (as a direct sum) a compatible and an incompatible measurement assemblage.
Now, projections onto the components of the direct sum yield measurement assemblages that have different compatibility properties, i.e., one is compatible and one is incompatible.
We note that this structure can be also realised through the concept of commutativity domain, as characterised by a theorem of Ylinen \cite{Yli85}.

There are however less trivial examples, and we will discuss one of them in detail when demonstrating the inequivalence between joint measurability and coexistence for qubit POVMs.

\section{Inequivalence of joint measurability and coexistence for qubits}

The notion of coexistence of measurement assemblages goes back to Ludwig~\cite{Lud54}.
Ludwig's original formulation is in the language of measure theory, which we will omit here to avoid technicalities.
Instead, we use the fact that the concept can be recast as joint measurability of all yes-no questions (or binarisations) of a given measurement assemblage~\cite{HMZ16}.
Recall that a binarisation of a POVM $\{M_{a}\}_a$ with respect to an outcome subset $X$ is a two-valued POVM $\{\sum_{a\in X}M_{a},\sum_{a\notin X}M_{a}\}$.
For jointly measurable assemblages the parent POVM gives an answer to all questions before binarisation, so clearly joint measurability implies coexistence.
The problem of identifying scenarios in which these two notions do not coincide has formed its own research program.
In certain cases, including projective, binary, and extremal measurements, these two notions coincide \cite{Lah03,Pel14,HPU15}.
Up to now, two classes of examples of coexistent measurement assemblages that are incompatible have been reported~\cite{RRW13,Pel14}.
These classes work for systems whose dimension is three or larger.
Here, we extend one of these classes to the missing qubit case.
Our solution goes as follows.
Take two POVMs on a qutrit system that are known to be incompatible and coexistent~\cite{Pel14}:
\begin{align}
  A_i:=&\frac{1}{2}(\id - |i\rangle\langle i|),\ i=0,1,2\label{eqn:CoexA}\\
  B_j:=&\begin{cases}\frac{1}{2}|j\rangle\langle j|,\ &j=0,1,2\\
  \frac{1}{2}|\psi_{j-3}\rangle\langle\psi_{j-3}|,\ &j=3,4,5,\label{eqn:CoexB}\end{cases}
\end{align}
where $\{|i\rangle\}_{i=0}^2$ is the computational basis and $|\psi_j\rangle=\frac{1}{\sqrt 3}(|0\rangle+\omega^j|1\rangle+\omega^{2j}|2\rangle)$ with $\omega=\exp(2\mi\pi/3)$ is its Fourier-connected basis.
These measurements are coexistent, as every binarisation of the measurement given by Eq.~\eqref{eqn:CoexA} gives an element ($j=0,1,2$) of the measurement in Eq.~\eqref{eqn:CoexB}.
This shows that $B$ functions as a parent measurement for all binarisations of both measurements.
More precisely, we note that joint measurability can be equivalently formalised as the existence of a POVM $\{G_\lambda\}_\lambda$ and classical post-processings, i.e., probability distributions, $p(a|x,\lambda)$ such that
\begin{align}\label{eqn:JMMarkov}
  M_{a|x}=\sum_\lambda p(a|x,\lambda)G_\lambda.
\end{align}
To see that this definition is equivalent to our main definition, one can define a joint measurement from the r.h.s.~of Eq.~\eqref{eqn:JMMarkov} through $G_{\vec a}=\sum_\lambda \Pi_x p(a|x,\lambda)G_\lambda$.
As any POVM is a joint measurement of its own binarisations in the sense of Eq.~\eqref{eqn:JMMarkov}, and as the binarisations of the measurements given by Eq.~\eqref{eqn:CoexA} and Eq.~\eqref{eqn:CoexB} are binarisations of the latter, we have proven their coexistence.
As the incompatibility of these measurements is proven in~\cite{Pel14} and can also be deduced from the subsequent discussion, we omit the proof here.

To find the desired qubit example, we analyse the compatibility of these measurements in the two-dimensional subspace spanned by $\ket{\psi_0}$ and $\ket{\psi_1}$.
Under the projection  $P_2=|\psi_0\rangle\langle\psi_0|+|\psi_1\rangle\langle\psi_1|$
to this subspace, the POVMs transform as follows
\begin{align}
  A_i\mapsto&\quad\frac{1}{6}|\psi_0+\bar{\omega}^{i+1}\psi_1\rangle\langle\psi_0+\bar{\omega}^{i+1}\psi_1|\\
  +&\quad\frac{1}{6}|\psi_0+\bar{\omega}^{i+2}\psi_1\rangle\langle\psi_0+\bar{\omega}^{i+2}\psi_1|\nonumber\\
  B_j\mapsto&\begin{cases}\frac{1}{6}(|\psi_0+\bar{\omega}^{j}\psi_1\rangle\langle\psi_0+\bar{\omega}^{j}\psi_1|),\ &j=0,1,2\\ P_2 B_j P_2=B_j,\ &j=3,4\\ 0,\ &j=5.\end{cases}
\end{align}
We let $\tilde A$ and $\tilde B$ be the matrix representations of the restrictions of the POVMs $A$ and $B$ in the basis $\{|\psi_0\rangle,|\psi_1\rangle\}$.
Indeed,
\begin{align}\label{eqn:coexJMBloch}
    \tilde A_0&=\frac{1}{2}\Big(\frac{2}{3}\openone-\frac{1}{3}\sigma_x \Big)\\
    \tilde A_1&=\frac{1}{2}\Big(\frac{2}{3}\openone+ \frac16 \sigma_x + \frac1{2\sqrt3} \sigma_y \Big)\\
    \tilde A_2&=\frac{1}{2}\Big(\frac{2}{3}\openone+ \frac16 \sigma_x - \frac1{2\sqrt3} \sigma_y \Big)\\
    \tilde B_0&=\frac{1}{2}\Big(\frac13 \openone+\frac13\sigma_x\Big)\\
    \tilde B_1&=\frac{1}{2}\Big(\frac13 \openone -\frac16 \sigma_x -\frac1{2\sqrt3}\sigma_y\Big)\\
    \tilde B_2&=\frac{1}{2}\Big(\frac13 \openone -\frac16 \sigma_x +\frac1{2\sqrt3}\sigma_y\Big)\\
    \tilde B_3&=\frac{1}{2}\Big(\frac12\openone+\frac12 \sigma_z\Big)\\
    \tilde B_4&=\frac{1}{2}\Big(\frac12\openone-\frac12 \sigma_z\Big).
\end{align}
We note that $\tilde A$ and $\tilde B$ are coexistent due to the fact that they are projections of a coexistent measurement assemblage.
Formally, the former parent of the binarisations is truncated to a parent of the projected assemblages.
To prove incompatibility of the truncated measurement assemblage, we note that the operators $\{\tilde B_0, \tilde B_1, \tilde B_2, \tilde B_3\}$ are linearly independent and one can write $\tilde B_4$ as their linear combination,
\begin{align}\label{eqn:lindep}
  \tilde B_4=\tilde B_0 + \tilde B_1 + \tilde B_2 - \tilde B_3.
\end{align}
Using Eq.~\eqref{eqn:lindep} and assuming that $\tilde A$ and $\tilde B$ are jointly measurable, we get (as $\tilde B$ is rank-one)~\cite{Pel14}
\begin{align}
  \tilde A_i=\sum_{j=0}^4 p_{ij}\tilde B_j &= (p_{i0}+p_{i4})\tilde B_0 + (p_{i1}+p_{i4})\tilde B_1\nonumber\\
  &\,+ (p_{i2}+p_{i4})\tilde B_2 + (p_{i3}-p_{i4})\tilde B_3.
\end{align}
As the operators on the r.h.s.~are linearly independent, we get for $\tilde A_0$ the coefficients $p_{00}=p_{03}=p_{04}=0$ and $p_{01}=p_{02}=1$ and for $\tilde A_1$ the coefficients $p_{11}=p_{13}=p_{14}=0$ and $p_{10}=p_{12}=1$.
This is already a contradiction as $p_{02}+p_{12}=2 > 1$.
Hence the truncated measurements are coexistent, but not jointly measurable.

Note that a final coarse graining can be applied without losing this feature, namely, one can group the first two outcomes of $\tilde B$ so as to get a four-valued measurement.
The coexistence is obviously preserved in the process, and the incompatibility can be shown by computing the depolarising incompatibility robustness, see Eq.~\eqref{eqn:primal}, which is approximately 0.9830.
Hence, we have constructed a counterexample for the coexistence problem in the qubit case including one three outcome and one four outcome POVM.
However, one might wonder whether there exists a smaller counterexample, i.e., one with less outcomes.
We have not been able to find any projection preserving the incompatibility of the example from Ref.~\cite{RRW13}, which features two and three outcomes.
We have also explored this question numerically via a seesaw method consisting of two semidefinite programs (SDP), but haven't found such examples.

Below we explain the seesaw algorithm for a pair of POVMs $\{A_i\}_i^{m_a}$ and $\{B_j\}_j^{m_b}$, but it can be extended to the case of three or more measurements.
The algorithm starts by sampling two random POVMs, which we denote as $\{A^{(0)}_i\}_i^{m_a}$ $\{B^{(0)}_j\}_j^{m_b}$.
For this pair of POVMs we construct an incompatibility witness as follows:
\begin{align}
  \max_{X_i,Y_j,N}\quad & \sum_i^{m_a}\tr(X_i A^{(0)}_i)+\sum_j^{m_b}\tr(Y_j B^{(0)}_j)\label{eqn:seesaw_1}\\
  \text{s.t.}\quad & X_i\geq 0,\; \forall i\in [m_a],\nonumber\\ & Y_j\geq 0,\; \forall j\in [m_b],\nonumber\\
  & X_i + Y_j \leq N,\; \forall i\in[m_a], j\in[m_b],\nonumber\\
  & \tr N=1,\; N^\dagger = N.\nonumber
\end{align}
This is the dual SDP to the generalised incompatibility robustness~\cite{UKS+19}.

Let us denote the solutions to the above SDP as $\{X^{(1)}_i\}_i^{m_a}$ and $\{Y^{(1)}_j\}_j^{m_b}$.
Now, the second SDP in our seesaw algorithm is designed to look for coexistent POVMs which would maximise the witness $\{X^{(1)}_i\}_i^{m_a}$ and $\{Y^{(1)}_j\}_j^{m_b}$.
This SDP reads as follows:
\begin{align}
  \max_{G_\lambda}\quad & \sum_i^{m_a}\tr(X^{(1)}_i A_i)+\sum_j^{m_b}\tr(Y^{(1)}_j B_j)\label{eqn:seesaw_2}\\
  \text{s.t.}\quad & G_\lambda \geq 0,\; \forall \lambda,\nonumber\\
  & \sum_\lambda D(S^a|\lambda)G_\lambda = \sum_{i\in S^a} A_i,\; \forall S^a \subset [m_a],\nonumber\\
  & \sum_\lambda D(S^b|\lambda)G_\lambda = \sum_{j\in S^b} B_j,\; \forall S^b \subset [m_b],\nonumber\\
  & \sum_\lambda G_\lambda = \id,\; \sum_i^{m_a}A_i = \id,\; \sum_j^{m_b}B_j = \id.\nonumber
\end{align}
In the above SDP the coexistence of POVMs $\{A_i\}_i^{m_a}$ $\{B_j\}_j^{m_b}$ is ensured by joint measurably of every binarisation of the latter.
With a slight abuse of notations, the post-processing functions $D(S^a|\lambda)$ should satisfy $D(S^a|\lambda)+D([m_a]\setminus S^a|\lambda) = 1, \forall \lambda$, and $D(S^b|\lambda)+D([m_b]\setminus S^b|\lambda) = 1, \forall \lambda$.
As usual, these post-processing functions can be taken to be deterministic.
The POVMs $\{A_i\}_i^{m_a}$ $\{B_j\}_j^{m_b}$ are auxiliary variables of the SDP in Eq.~\eqref{eqn:seesaw_2} since they are defined as linear functions of $G_\lambda$.
However, we are interested in these POVMs which come from the solutions of the SDP given by Eq.~\eqref{eqn:seesaw_2}.
Let us denote these solutions as $\{A^{(1)}_i\}_i^{m_a}$ $\{B^{(1)}_j\}_j^{m_b}$.
The final step of defining the seesaw algorithm is the imputing $\{A^{(1)}_i\}_i^{m_a}$ $\{B^{(1)}_j\}_j^{m_b}$ to the SDP in Eq.~\eqref{eqn:seesaw_1} and iterating the process until the value of the objective function converges to some point.
If at any point the solution of the SDP in Eq.~\eqref{eqn:seesaw_1} returns a value larger than $1$, an example of incompatible coexistent POVMs $\{A_i\}_i^{m_a}$ and $\{B_j\}_j^{m_b}$ is found.

With this method, we were able to find numerical examples for various dimensions of Hilbert space as well as various configurations.
For instance, for $d=3$ and the simplest case of one binary and one trinary POVM, the algorithm converges to examples for about $1\%$ of the initial random samples of the POVMs $\{A^{(0)}_i\}_i^{m_a}$ and $\{B^{(0)}_j\}_j^{m_b}$.
For a higher number of outcomes, the algorithm was more likely to find examples.
However, in the qubit case the algorithm could find only a weakly incompatible example for two POVMs with three and four outcomes.
Due to the low value of incompatibility we could not give an analytical form of this example.

Finally, note that a similar seesaw algorithm has been previously used to find examples of quantum states with interesting entanglement properties~\cite{MMG16}.

Note also that the incompatibility of the above example is only partly compressible since there exist projections onto qubit subspaces such that the resulting pair of measurements is compatible.
For instance, under the projection $P_2=\ketbra{0}+\ketbra{1}$, the incompatibility is lost.

\section{Implications for quantum steering}

Shifting our focus to the connection between joint measurability and steering, we pose a question on the role of dimension in this particular result.
Namely, it is known that any incompatible measurement assemblage on one party leads to a steerable state assemblage on the other party given that one possesses a suitable catalyst state.
The known catalyst states have full Schmidt rank.
Hence, one can raise the question of what happens if the dimension of the steered party is bounded.
Our examples of incompatible measurements that are compatible in every subspace allow one to answer this question.
Namely, one can see the shared state in a steering experiment as a Heisenberg channel (i.e., completely positive identity-preserving map) that maps one party's measurements to the pretty good measurements on the other party (up to transposition), i.e., $A_{a_x|x}\mapsto M_{a_x|x}^T$ according to~\cite{KBUP17}
\begin{align}
  \Lambda_{\varrho_{AB}}(A_{a_x|x})=\varrho_B^{-\frac12}\tr_A[(A_{a_x|x}\otimes\id)\varrho_{AB}]^T\varrho_B^{-\frac12},
\end{align}
where the transpose is taken in the eigenbasis of $\varrho_B=\tr_A(\varrho_{AB})$.
This channel is the Choi channel of $\varrho_{AB}$.
Clearly, any incompatible measurement assemblage that becomes compatible under any channel to a smaller-dimensional system does not enable steering when the steered party's dimension is smaller than the other party's dimension.

\section{Conclusions}

We have developed the concept of measurement incompatibility in subspaces.
We showed that this question leads to a rich structure, as truncated measurements can feature very different compatibility properties.

In particular we have shown the existence of sets of POVMs that have incompressible incompatibility, i.e., they become jointly measurable in every possible subspace.
We provided an example for this in dimension $d=3$, with projections for every qubit subspace ($n=2$).
It would be interesting to find other examples and see if this is possible in general, that is, for every $d$ and $n<d$.
Here the higher-dimensional counterexamples to the Peres conjecture of Ref.~\cite{YO17} might prove useful.

Another direction would be to characterise the sets of POVMs featuring different forms of incompatibility in subspaces.
The set of sets of POVMs with incompressible incompatibility should be convex.
What about others? Can one formalise witnesses for detecting different forms of incompatibility in subspaces?

It would also be interesting to see if incompatibility in subspaces is connected to the idea of compression with respect to a set of measurements~\cite{BRW18} or to genuine high-dimensional steering~\cite{DSU+20}.

Finally, we discussed some applications of these ideas.
First, we used an example of partly compressible incompatibility to show the inequivalence of joint measurability and coexistence in the simplest qubit case.
We also discussed the consequences for steering tests.
We conclude by noting that there are also other types of correlations that are closely related to various forms of measurement incompatibility such as preparation contextuality, Bell nonlocality, violations of macrorealism, and channel steering.
We believe that our framework can lead to a better understanding of these concepts and their applications.

\section{Acknowledgements}

We are thankful for Marco T\'ulio Quintino for useful comments on the manuscript.
We acknowledge the financial support from the Swiss National Science Foundation (Starting Grant DIAQ and NCCR SwissMap), the Finnish Cultural Foundation, the ERC (Consolidator Grant No.~683107/TempoQ),  
the Deutsche Forschungsgemeinschaft (DFG, German Research Foundation - Projects No.~447948357 and No.~440958198), and the Sino-German Center for Research Promotion, as well as the partial support by the Foundation for Polish Science (IRAP project, ICTQT, Contract No.~2018/MAB/5, co-financed by the EU within the Smart Growth Operational Programme).

\section{Note added}

While completing this work, we became aware of the recent and independent work in Ref.~\cite{LN20}.

\appendix

\section{The stronger Peres conjecture}
\label{app:peres}

In this Appendix we show how to construct incompatible measurements that are compatible in every two-dimensional subspace.
In Ref.~\cite{MGHG14} the authors have proven the existence of bound entangled steerable quantum states, hence providing a counterexample to the stronger Peres conjecture~\cite{Pus13}.
These states are given as
\begin{equation}\label{eqn:Perescounterstate}
\begin{aligned}
  \varrho_{AB}=&\lambda_1|\psi_1\rangle\langle\psi_1|+\lambda_2|\psi_2\rangle\langle\psi_2|\\
  &+\lambda_3(|\psi_3\rangle\langle\psi_3|+|\tilde\psi_3\rangle\langle\tilde\psi_3|),
\end{aligned}
\end{equation}
where
\begin{equation}
\begin{aligned}
  |\psi_1\rangle&=(|12\rangle+|21\rangle)/\sqrt2\\
  |\psi_2\rangle&=(|00\rangle+|11\rangle-|22\rangle)/\sqrt3\\
  |\psi_3\rangle&=m_1|01\rangle+m_2|10\rangle+m_3(|11\rangle+|22\rangle)\\
  |\tilde\psi_3\rangle&=m_1|02\rangle-m_2|20\rangle+m_3(|21\rangle-|12\rangle)
\end{aligned}
\end{equation}
and $m_i\geq0$.
As noted in Ref.~\cite{MGHG14}, this family of states can be made invariant under partial transposition on Alice's side by choosing
\begin{equation}\label{eqn:PerescounterPPTparameters}
\begin{aligned}
  \lambda_1&=1-\frac{2+3m_1m_2}{4-2m_1^2+m_1m_2-2m_2^2}\\
  \lambda_3&=\frac{1}{4-2m_1^2+m_1m_2-2m_2^2}\\
  \lambda_2&=1-\lambda_1-2\lambda_3.
\end{aligned}
\end{equation}
For positivity of the states, one has to check the limits on $m_i$.

The states in Eq.~\eqref{eqn:Perescounterstate} are steerable (at least for a certain range of parameters) with two measurements on Alice's side given by two MUBs~\cite{MGHG14}.
\begin{align}\label{eqn:Perescountermeasurements}
  |\varphi_{1,2|1}\rangle&=(1/\sqrt3,-1/\sqrt6,\pm1/\sqrt2)\nonumber\\
  |\varphi_{3|1}\rangle&=(1/\sqrt3,\sqrt{2/3},0)\nonumber\\
  |\varphi_{1|2}\rangle&=(1,0,0)\nonumber\\
  |\varphi_{2|2}\rangle&=(0,\omega/\sqrt2,\mi\omega/\sqrt2)\nonumber\\
  |\varphi_{3|2}\rangle&=(0,\overline\omega/\sqrt2,-\mi\overline\omega/\sqrt2),
\end{align}
where $\omega=\exp(2\mi\pi/3)$.

For our purposes, the steerability of the state assemblage
\begin{align}\label{eqn:Peresconjuctureassemblage}
  \varrho_{a|x}:=\tr_A[(|\varphi_{a|x}\rangle\langle\varphi_{a|x}|\otimes\id)\varrho_{AB}]
\end{align}
together with the partial transpose invariance of the state in Eq.~\eqref{eqn:Perescounterstate} are crucial.
Namely, if one maps Bob's side of the shared state into any qubit subspace, one is left with a separable state and, consequently, an unsteerable assemblage.

To take this idea a bit further, recall that steerability is very closely related to joint measurability of POVMs.
The connection is given by renormalisation of state assemblages, i.e., mapping $\varrho_{a|x}$ into $B_{a|x}:=\varrho_B^{-1/2}\varrho_{a|x}\varrho_B^{-1/2}$, where $\varrho_B=\tr_A(\varrho_{AB})$.
Note that here the state $\varrho_B$ is possibly inverted only on a subspace and, hence, the resulting POVMs $B_{a|x}$ are in general defined on a system of dimension less than or equal to Bob's original dimension.

Whereas the state assemblage in Eq.~\eqref{eqn:Peresconjuctureassemblage} originates from the state $\varrho_{AB}$, the normalised state assemblage (or POVMs) $\{B_{a|x}\}_{a|x}$ originates, up to a constant, from the filtered state $(\id\otimes\varrho_B^{-1/2})\varrho_{AB}(\id\otimes\varrho_B^{-1/2})/N$, where $N$ is the dimension of the support of $\varrho_B$.
As the original state $\varrho_{AB}$ is invariant under partial transposition on Alice's side, so is the filtered state.
Putting the known connection between steerability and joint measurability together with the fact that the filtered state is PPT and that PPT states in $\mathds{C}^3\otimes\mathds{C}^2$ are separable, we arrive at the following observation.

\begin{observation}
There exists an incompatible measurement assemblage in a qutrit system that becomes compatible under any restriction (i.e., CPTP mapping) to a qubit system.
\end{observation}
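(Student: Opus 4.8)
\emph{Proof sketch.} The plan is to realise the claimed assemblage as the pretty good measurements attached to the counterexample to the stronger Peres conjecture from Ref.~\cite{MGHG14}. I would take the two-qutrit state $\varrho_{AB}$ of Eq.~\eqref{eqn:Perescounterstate} with the parameters of Eq.~\eqref{eqn:PerescounterPPTparameters} (so that $\varrho_{AB}$ is PPT), Alice's two mutually unbiased measurements $|\varphi_{a|x}\rangle\langle\varphi_{a|x}|$ of Eq.~\eqref{eqn:Perescountermeasurements}, the induced state assemblage $\varrho_{a|x}$ of Eq.~\eqref{eqn:Peresconjuctureassemblage}, and set $M_{a|x}:=\varrho_B^{-1/2}\varrho_{a|x}\varrho_B^{-1/2}$ with $\varrho_B=\tr_A(\varrho_{AB})$. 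The statement then splits into (i) $\{M_{a|x}\}$ is incompatible, and (ii) the restriction of $\{M_{a|x}\}$ to any qubit system is jointly measurable.

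Part (i) I would read off from the literature: $\varrho_{a|x}$ is steerable by Ref.~\cite{MGHG14}, and by the one-to-one correspondence between steering and joint measurability of Ref.~\cite{UBGP15} the pretty good measurements of a steerable assemblage cannot admit a parent POVM, so $\{M_{a|x}\}$ is incompatible.

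For part (ii) I would model a restriction to a qubit by a unital completely positive (Heisenberg) channel $\Phi\colon\mathcal{B}(\mathds{C}^3)\to\mathcal{B}(\mathds{C}^2)$ acting on the effects --- equivalently by a CPTP encoding of the qubit into the qutrit, since every unital CP map is the dual of a CPTP map. By the state--channel duality~\cite{UBGP15,KBUP17} the assemblage $\{M_{a|x}\}$ is, up to a transpose in the eigenbasis of $\varrho_B$ (which affects neither incompatibility nor joint measurability), the image of Alice's measurements under the Choi channel $\Lambda_{\varrho_{AB}}$; hence its restriction is the image of Alice's measurements under $\Phi\circ\Lambda_{\varrho_{AB}}$. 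The crucial point is that the Choi operator $C$ of $\Phi\circ\Lambda_{\varrho_{AB}}$ is PPT and lives in $\mathds{C}^3\otimes\mathds{C}^2$: the state $\varrho_{AB}$ is PPT; each ingredient of the Choi channel $\Lambda_{\varrho_{AB}}$ (a partial trace against a fixed input, an overall transpose, a one-sided filtering by $\varrho_B^{-1/2}$) preserves the PPT property; and applying the completely positive map $\Phi$ on the second tensor factor commutes with the partial transpose on the first and thus preserves positivity of the partial transpose. By the Horodecki criterion~\cite{HHH96}, a PPT operator in $\mathds{C}^3\otimes\mathds{C}^2$ is separable, and a separable decomposition of $C$ directly exhibits a single parent POVM on the qubit together with a classical post-processing returning all the restricted POVMs --- that is, the restricted assemblage is jointly measurable, as claimed. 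Taking $\Phi$ to be the compression onto a two-dimensional subspace specialises this to the incompressibility statement discussed in the main text.

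The argument relies on several standard facts (PPT $\Rightarrow$ separable in $\mathds{C}^3\otimes\mathds{C}^2$~\cite{HHH96}; separable-state assemblages are unsteerable; the steering/joint-measurability duality~\cite{UBGP15,KBUP17}; stability of the PPT property under one-sided filtering, transposition, and one-sided completely positive maps), so I expect no isolated hard step. The main obstacle is the bookkeeping that ties them together: pinning down a model of ``restriction to a qubit'' and checking that it is exactly reproduced by composing the second party's side of the construction with a Heisenberg channel (so that no restriction is missed), tracking the transpose and the normalisation introduced by the map $\varrho_{a|x}\mapsto M_{a|x}$ so that the separable operator one lands on really generates the restricted POVMs, and handling the case where $\varrho_B$ or its image under the channel is rank-deficient, so that ``qubit'' is to be read as ``dimension at most two'' and all inverses are taken on supports.
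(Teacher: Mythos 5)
Your proposal is correct and follows essentially the same route as the paper's Appendix~A: the pretty good measurements of the PPT bound entangled steerable state of Ref.~\cite{MGHG14}, incompatibility from steerability via Ref.~\cite{UBGP15}, and compatibility of every qubit restriction from the fact that a one-sided (completely positive) map on Bob's side preserves the PPT property of the filtered state, so the relevant operator is PPT in $\mathds{C}^3\otimes\mathds{C}^2$ and hence separable by Ref.~\cite{HHH96}. The only divergence is the last step: the paper returns to the steering picture (the separable restricted state yields an unsteerable assemblage whose steering-equivalent observables coincide with the restricted POVMs because Bob's marginal of the restricted filtered state is $\tfrac12\id_2$), whereas you read the separable Choi operator of $\Phi\circ\Lambda_{\varrho_{AB}}$ directly as a measure-and-prepare structure, so that a separable decomposition $C=\sum_\lambda\omega_\lambda\otimes F_\lambda$ immediately gives the parent POVM $G_\lambda=\tr(\omega_\lambda)F_\lambda$ and post-processings $p(a|x,\lambda)=\tr(\omega_\lambda A_{a|x}^{T})/\tr(\omega_\lambda)$. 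This is a valid and slightly more self-contained way to close the argument (it uses the steering--joint-measurability duality only for the incompatibility direction), while the paper's phrasing keeps the whole proof inside the steering language and makes the normalisation bookkeeping (the trace-two operator and the maximally mixed qubit marginal) explicit; your caveats about the transpose, the support of $\varrho_B$, and rank-deficient restrictions match the caveats the paper itself notes.
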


\begin{proof}
  Filter the state from Eq.~\eqref{eqn:Perescounterstate} with $\varrho_B^{-1/2}$ on Bob's side.
  Choosing the parameters as in Eq.~\eqref{eqn:PerescounterPPTparameters} results in a PPT state (because the state is invariant under partial transposition on Alice's side).
  Performing the measurements from Eq.~\eqref{eqn:Perescountermeasurements} on Alice's side leads to a filtered version of a steerable assemblage.
  This essentially corresponds (i.e., modulo possible normalisation constant due to the filter) to the pretty good measurements associated to the original state assemblage, which are incompatible.
  Hence, the filtered assemblage is steerable.

  Mapping this state assemblage into any two-dimensional subspace gives an assemblage, which originates from the filtered state together with a local map on Bob's side.
  As the resulting state is invariant under partial transposition on Alice's side, one gets a PPT state in $\mathds{C}^3\otimes\mathds{C}^2$, which is separable and consequently can only lead to unsteerable assemblages.
  Hence, the restricted assemblage is unsteerable for any CPTP map acting on Bob's side.

  To see the connection to joint measurability, notice that Bob's side of the filtered state is maximally mixed and, hence, the pretty good measurement link between joint measurability and steering corresponds to multiplication with a constant.
  To be more precise, take the assemblage from Eq.~\eqref{eqn:Peresconjuctureassemblage} and write
  \begin{align}
    B_{a|x}:=\varrho_B^{-1/2}\varrho_{a|x}\varrho_B^{-1/2}=\tr_A[(|\varphi_{a|x}\rangle\langle\varphi_{a|x}|\otimes\id)\varrho_{AB}^{\text{filt}}],
  \end{align}
  where $\varrho_{AB}^{\text{filt}}=(\id\otimes\varrho_B^{-1/2})\varrho_{AB}(\id\otimes\varrho_B^{-1/2})$.
  These observables are by definition not jointly measurable.
  Mapping these observables into any two-dimensional subspace gives
  \begin{align}
    \Lambda^\dagger(B_{a|x})=\tr_A[(|\varphi_{a|x}\rangle\langle\varphi_{a|x}|\otimes\id)(\id\otimes\Lambda^\dagger)(\varrho_{AB}^{\text{filt}})].
  \end{align}
  Note that the positive operator $(\id\otimes\Lambda^\dagger)\varrho_{AB}^{\text{filt}}$ is not normalised.
  However, the trace of this operator is equal to two.
  Putting this together with the PPT invariance, we see that the state assemblage $\tilde\varrho_{a|x}:=\frac{1}{2}\Lambda^\dagger(B_{a|x})$ is unsteerable.
  The steering equivalent observables of this assemblage are simply $\Lambda^\dagger(B_{a|x})$ as Bob's side of the state $(\id\otimes\Lambda^\dagger)\varrho_{AB}^{\text{filt}}/2$ is $\frac12\id_2$, where $\id_2$ is the identity operator in $\mathds{C}^2$.
\end{proof}

\end{document}